\newtheorem{problem}{Problem}
\newtheorem{theorem}{Theorem}
\newtheorem{lemma}{Lemma}
\newtheorem{remark}{Remark}
\newtheorem{definition}{Definition}
\newtheorem{assumption}{Assumption}
\DeclareMathOperator{\clamp}{clamp}
\journal{Control Engineering Practice}
\begin{document}

\begin{frontmatter}

%% Title, authors and addresses

\title{A Long-Duration Autonomy Approach to Connected and Automated Vehicles} %% Article title

%% use optional labels to link authors explicitly to addresses:
%% \author[label1,label2]{}
%% \affiliation[label1]{organization={},
%%             addressline={},
%%             city={},
%%             postcode={},
%%             state={},
%%             country={}}
%%
%% \affiliation[label2]{organization={},
%%             addressline={},
%%             city={},
%%             postcode={},
%%             state={},
%%             country={}}

\author{Logan E. Beaver} %% Author name

%% Author affiliation
\affiliation{organization={Old Dominion University},%Department and Organization
            addressline={Department of Mechanical and Aerospace Engineering}, 
            city={Norfolk},
            postcode={VA}, 
            state={23529},
            country={USA}}

%% Abstract
\begin{abstract}
%% Text of abstract
In this article, we present a long-duration autonomy approach for the control of connected and automated vehicles (CAVs) operating in a transportation network.
In particular, we focus on the performance of CAVs at traffic bottlenecks, including roundabouts, merging roadways, and intersections.
We take a principled approach based on optimal control, and derive a reactive controller with guarantees on safety, performance, and energy efficiency.
We guarantee safety through high order control barrier functions (HOCBFs), which we ``lift'' to first order CBFs using time-optimal motion primitives.
This yields a set of first-order CBFs that are compatible with the control bounds.
We demonstrate the performance of our approach in simulation and compare it to an optimal control-based approach.
\end{abstract}

%% Keywords
\begin{keyword}
%% keywords here, in the form: keyword \sep keyword

Smart Cities \sep Autonomous Systems \sep Transportation Systems \sep Connected Vehicles \sep Long-duration Autonomy \sep CBFs

%% PACS codes here, in the form: \PACS code \sep code

%% MSC codes here, in the form: \MSC code \sep code
%% or \MSC[2008] code \sep code (2000 is the default)
\end{keyword}

\end{frontmatter}

%%%%%%%%%%%%%%%%%%%%%%%%%%%%%%%%%%%%%%%%%%%%%%%%%%%%%%%%%%%%%%%%%%%%%%%%%%%%%%%%

%\begin{IEEEkeywords}
%\end{IEEEkeywords}

\section{Introduction}

Connected and automated vehicles (CAVs) continue to proliferate transportation networks. As a result, it is critical for us to develop control algorithms that are computationally efficient, provably safe, and produce energy-efficient trajectories.
This idea is consistent with long-duration autonomy, where robotic agents are left to operate in the field for months to years without regular human feedback.
Thus, the purpose of this article is to bridge the gap between optimal CAV motion planning and long-duration autonomy techniques.

One emerging approach to long-duration autonomy is constraint-driven (or ecologically-inspired) control for cyber-physical systems \citep{Egerstedt2018RobotAutonomy}.
For long-duration autonomy tasks, robots are left to interact with their environment on timescales significantly longer than what can be achieved in a laboratory setting.
These approaches necessarily emphasize safe energy-minimizing control policies for the agents, whose behaviors are driven by interactions with the environment.
Several applications of constraint-driven multi-agent control have been explored recently \citep{Notomista2019AnSystems,Ibuki2020Optimization-BasedBodies,beaver2023constraint}.
Under this constraint-driven paradigm, each CAV seeks to minimize its total energy consumption, subject to a set of safety and task constraints.

In this article, we propose a constraint-driven approach to control CAVs operating alongside other CAVs and human drivers while minimizing energy consumption and guaranteeing safety.
Long-term energy savings has been partially addressed by the platooning literature, where developments in cooperative adaptive cruise control \citep{wang2017developing,ames2014control} and mixed-traffic platooning \citep{mahbub2021platoon} have shown promise.
Decentralized algorithms, such as Reynolds' flocking, have been applied for highway vehicles \citep{fredette2017fuel} to minimize energy consumption while maintaining a desired speed; a recent review of these techniques is presented in \citep{beaver2021overview}.
However, a CAV operating over a long duration will spend a significant amount of time at traffic bottlenecks \citep{chalaki2022research}, such as intersections, roundabouts, and merging zones.
Existing platooning controllers are generally not designed with the safety constraints required to avoid lateral collisions.

Additional approaches have used optimal control to generate CAV trajectories through traffic bottlenecks \citep{sabouni2024optimal,malikopoulos2021optimal}, which can include treating platoons as one large vehicle \citep{mahbub2023constrained}.
However, these solutions can be computationally expensive, and can be challenging to incorporate with human driven vehicles.
As a result, these approaches are not generally applicable far from traffic bottlenecks, where the CAVs may switch to another control mode \citep{bang2022combined}.

In contrast, this work is motivated by long-duration autonomy to derive a single control algorithm that is feasible throughout the entire transportation network.
The controller is reactive, and thus we provide strong guarantees on safety without requiring an explicit predictive model of other vehicles' trajectories using high-order control barrier functions (HOCBFs) \citep{xiao2019control}.
This means that our approach is compatible with signalized intersections, human driven vehicles, and mixed CAV-human traffic.
Our approach is also decentralized, and thus it is well-suited to ``open systems'' where other agents may suddenly enter or exit the system.
The contributions of this work are as follows:
\begin{itemize}
    \item A framework to map infinite-horizon optimal control problems to simple reactive controllers (Theorem \ref{thm:lqr}).
    \item A constraint-driven controller for long-duration CAV autonomy (Problem \ref{prb:reactive}).
    \item A motion-primitives approach to HOCBFs that guarantees both safety and actuation constraints (Theorem \ref{thm:CBFs}).
\end{itemize}

The remainder of the article is organized as follows.
We briefly discuss notation next, followed by the problem formulation in Section \ref{sec:problem}.
We convert our optimal control problem into a reactive controller in Section \ref{sec:policy}, and prove its safety and performance.
We demonstrate our approach in Section \ref{sec:sim}, and compare the results with an optimal control-based solution for 10 CAVs at an unsignalized intersection.
Finally, we draw conclusions and propose some directions for future research in Section \ref{sec:conclusion}.

\subsection{Notation}

Many classic references on optimal control, e.g., \citep{Bryson1975AppliedControl,Ross2015}, consider centralized optimal control problems.
Thus, directly adopting their notation may lead to ambiguities about the state space of a decentralized problem.
To relieve this tension, we take the following approach for an agent with index $i$.
Endogenous variables, e.g., the position of agent $i$, are written without explicit dependence on time.
Exogenous variables, e.g., the position of agent $j$ as measured by agent $i$, are written with an explicit dependence on time.
This notation is common in the applied mathematics literature \citep{Levine2011OnFlatness}, and makes it explicitly clear how functions evolve with respect to the state (e.g., state dynamics) and how they evolve with respect to time (e.g., external signals measured by the agent).

\section{Problem Formulation} \label{sec:problem}

Consider a collection of $N$ CAVs traveling through a transportation network.
We index each CAV with a subscript $i\in\{1,2,\dots,N\}$; each CAV has two states $p_i, v_i\in\mathbb{R}$ and a control action $u_i\in\mathbb{R}$.
These correspond to the longitudinal position, velocity, and acceleration of the CAV along some fixed path.
To model the CAV dynamics, we employ a double-integrator model,
\begin{equation} \label{eq:dynamics}
    \begin{bmatrix}
        \dot{p}_i \\ \dot{v}_i
    \end{bmatrix}
    = 
    \begin{bmatrix}
        0 & 1 \\ 0 & 0
    \end{bmatrix}
    \begin{bmatrix}
        p_i \\ v_i
    \end{bmatrix}
    +
    \begin{bmatrix}
        0 \\ 1
    \end{bmatrix}
    u_i.
\end{equation}

Our objective is the long duration deployment of CAVs in urban and highway environments.
To this end, each CAV seeks to optimize an infinite-horizon cost,
\begin{equation} \label{eq:Jcost}
    J(p_i, v_i, u_i) = \frac{1}{2}\int_{0}^{\infty} (v_i - v_i^d)^2 + \frac{1}{\alpha^2} u_i^2\,dt,
\end{equation}
where $v_i^d$ is a desired driving speed, e.g., the energy-optimal speed or the posted speed limit, $\alpha$ is a regularizing parameter to avoid impulse-like accelerations, and we use the fractions $1/2$ and $1/\alpha$ to simplify the notation of our optimal solution.
Each CAV is also subject to the control constraints,
\begin{align} \label{eq:bounds}
%    0 \leq v_i& \leq v_{\max}, \\
    |u_i|& \leq u_{\max},
\end{align}
which correspond to the maximum safe acceleration rates.
We impose a rear-end safety constraint between sequential vehicles,
\begin{equation} \label{eq:rear-end}
    \delta_i(t) - p_i \geq \gamma_i,
\end{equation}
where $\delta_i(t)$ measures the position of the preceding vehicle (if it exists), and $\gamma_i$ is a fixed standstill distance.
Note that, in the absence of a preceding vehicle, we let $\delta_i\to\infty$, and \eqref{eq:rear-end} is satisfied.

%\begin{figure}[ht]
%    \centering
%    \includegraphics[width=0.5\linewidth]{}
%    \caption{An automated intersection showing two vehicles following fixed paths (blue and green). The paths cross at a single \textit{conflict point} (red circle).}
%    \label{fig:intersection}
%\end{figure}

We formulate our problem based on the following working assumptions about the transportation network and vehicles.
\begin{assumption}\label{smp:nominal}
    Far from major traffic bottlenecks (e.g., merging zones, roundabouts, and intersections), the flow of CAVs is unimpeded and efficient.
\end{assumption}
Assumption \ref{smp:nominal} is common in the literature, as it enables designers to focus on single \citep{malikopoulos2021optimal} or multiple connected \citep{chalaki2022research} bottlenecks.
This assumption is justified as long as traffic far from the bottlenecks remains below the critical road capacity.

\begin{assumption} \label{smp:tracking}
    Each CAV is equipped with a tracking controller that handles noise, disturbances, model mismatch, and other errors.
\end{assumption}

\begin{assumption} \label{smp:comms}
    Communication between the CAVs and smart infrastructure is instantaneous with no noise, delay, or packet loss.
\end{assumption}

Assumptions \ref{smp:tracking} and \ref{smp:comms} are not restrictive on our approach, we only employ them to simplify our analysis.
Both assumptions can be relaxed for a real system using learning \citep{Greeff2021ExploitingProcesses} and robust control \citep{Emam2019RobustTestbed} methods.

Under Assumption \ref{smp:nominal}, we consider the $N$ CAVs traveling through an intersection, which is the most complex traffic bottleneck.
As each CAV is traveling on a fixed path, we can geometrically determine points along each path where collisions between CAVs may occur.
We refer to these points as \textit{collision nodes}, and we consider them at every point where two paths cross.
For each CAV $i$ in the intersection, we compute the number of collision nodes $K_i$.
This is the basis for our notion of a \textit{scheduled intersection}.

\begin{definition}[scheduled intersection] \label{def:psi}
A \textit{scheduled intersection} is an intersection where each approaching CAV $i$ is given explicit non-zero time interval(s) $\big[\underline{t}_i^k, \overline{t}_i^k\big] \in \mathbb{R}$ to cross conflict nodes $k = 1, 2, \dots, K_i$. 
\end{definition}

As we discuss in the sequel, computing the interval $\big[\underline{t}_i^k, \overline{t}_i^k\big]$ for CAV $i$ to cross conflict point $k\in\{1,2,\dots,K_i\}$ depends on the implementation of the scheduled intersection.
This may be the interval where a smart traffic signal will be  green, or it may be determined by an intelligent roadside unit, i.e., a \textit{coordinator} acting as a database.
With this definition in place, we formalize the control algorithm of the CAVs with Problem \ref{prb:original}.

\begin{problem} \label{prb:original}
    Generate the trajectory that optimizes,
    \begin{align*}
        \min_{T,u_i(t)}  &\int_0^T \frac{1}{2} (v_i-v_i^d)^2 + \frac{1}{\alpha^2} u_i^2 dt \\
    \text{subject to: } & \\
    \text{dynamics \& safety:}& \quad\eqref{eq:dynamics},\,\eqref{eq:bounds},\,\eqref{eq:rear-end}, \\
    \text{earliest arrival:}& \quad p(t) + \gamma_i \leq p(\underline{t}_i^k) \quad \forall t \leq \underline{t}_i^k,    \\
    \text{latest departure:}&\quad p(t) \geq p(\overline{t}_i^k) + \gamma_i \quad \forall t \geq \overline{t}_i^k, \\
    \text{given:}&\quad
    \bm{x}(t_0),\, \underline{t}_i^k, \overline{t}_i^k,
    \end{align*}
    for each collision node $k = 1, 2, \dots, K_i$ and over a fixed planning horizon $T$.
\end{problem}

Solving Problem \ref{prb:original} yields an optimal and safe trajectory, however, it requires having an explicit model of other vehicles' (possibly unknown) trajectories for $t\in[0,T]$ to estimate $\delta_i(t)$.
Next, we describe our optimization-based control policy to generate CAV trajectories based on Problem \ref{prb:original}.

\section{Optimization-Based Control Policy} \label{sec:policy}

We develop our control policy based on Problem \ref{prb:original} in two stages: (i) we derive the unconstrained optimal trajectory and find an equivalent feedback law, (ii) we guarantee constraint satisfaction using control barrier functions (CBFs) while minimizing deviation from the optimal feedback law.
To generate the optimal feedback law, we first consider the unconstrained solution to Problem \ref{prb:original}.
Because the cost \eqref{eq:Jcost} is convex, Pontryagin's minimum principle yields the \textbf{globally optimal} solution when no constraints are active.
Applying Pontryagin's minimum principle yields the optimality condition \citep{beaver2024optimal,beaver2023lq},
\begin{equation} \label{eq:ode}
    \ddot{u} - \alpha^2 u = 0.
\end{equation}

The optimality condition \eqref{eq:ode} is a linear ordinary differential equation with constant coefficients.
It yields a straightforward analytical solution,
\begin{equation} \label{eq:mp}
\begin{aligned}
    p_i(t) &= c_1 e^{\alpha t} + c_2 e^{-\alpha t} + c_3 t + c_4, \\
    v_i(t) &= c_1 \alpha e^{\alpha t} - c_2 \alpha e^{-\alpha t} + c_3, \\
    u_i(t) &= c_1 \alpha^2 e^{\alpha t} + c_2 \alpha^2 e^{-\alpha t},
\end{aligned}
\end{equation}
where $c_1$--$c_4$ are unknown constants of integration.
In existing approaches, the boundary conditions of Problem \ref{prb:original} are used to find the particular constants of integration in \eqref{eq:mp}.
This yields the optimal unconstrained trajectory for the CAV to track.
Instead, we use the following result to derive an equivalent feedback law to control the CAV.

\begin{theorem} \label{thm:lqr}
    As the planning horizon $T\to\infty$, the optimal unconstrained trajectory \eqref{eq:mp} approaches the feedback law,
    \begin{equation} \label{eq:lqr}
        u_i(\bm{x}_i) = \alpha\Big(v_i^d - v_i\Big),
    \end{equation}
    where $v_i^d$ is a desired steady-state velocity for CAV $i$.
\end{theorem}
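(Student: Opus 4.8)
The plan is to solve for the constants of integration $c_1$--$c_4$ in the general solution \eqref{eq:mp} under a natural set of boundary conditions for an infinite-horizon problem, and then take the limit $T\to\infty$. First I would observe that the term $c_1 e^{\alpha t}$ in the position, velocity, and control is unbounded as $t$ grows; since the cost \eqref{eq:Jcost} is an integral over $[0,\infty)$ of a nonnegative integrand containing $u_i^2$, any trajectory with $c_1 \neq 0$ yields infinite cost. Hence optimality (or even finiteness of the cost) forces $c_1 = 0$. This is the key structural step, and I expect it to be the main obstacle only in the sense that it needs to be argued carefully: one should either invoke a transversality/boundedness condition as $T\to\infty$, or directly note that $\int_0^\infty u_i^2\,dt < \infty$ is necessary for a candidate optimum, which immediately kills the growing exponential.

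With $c_1 = 0$, the control becomes $u_i(t) = c_2 \alpha^2 e^{-\alpha t}$ and the velocity becomes $v_i(t) = -c_2 \alpha e^{-\alpha t} + c_3$. Next I would identify $c_3$ with the steady-state velocity: as $t\to\infty$ the transient $e^{-\alpha t}$ term vanishes, so $v_i(t)\to c_3$, and for the running cost $(v_i - v_i^d)^2$ to remain integrable we need $c_3 = v_i^d$; this pins down the desired steady-state velocity in \eqref{eq:lqr}. Then $c_2$ is fixed by the initial condition $v_i(0) = v_i$, giving $-c_2\alpha + v_i^d = v_i$, i.e. $c_2 \alpha = v_i^d - v_i$.

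Finally I would substitute back: $u_i(t) = c_2 \alpha^2 e^{-\alpha t} = \alpha (v_i^d - v_i) e^{-\alpha t}$, and evaluating the feedback policy instantaneously at the current state (i.e. at $t = 0$, which by the time-invariance of the dynamics and cost is the relevant ``current time'' for a feedback law) yields $u_i(\bm{x}_i) = \alpha(v_i^d - v_i)$, which is exactly \eqref{eq:lqr}. To close the argument rigorously, I would note that the constants $c_2, c_3$ obtained this way coincide with the $T\to\infty$ limit of the finite-horizon constants: in the finite-horizon problem the transversality condition at $T$ forces a small but nonzero $c_1(T) = O(e^{-\alpha T})$, which vanishes in the limit, so the limiting trajectory is precisely the one constructed above. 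One could alternatively verify \eqref{eq:lqr} a posteriori by checking that the closed-loop solution $v_i(t) = v_i^d + (v_i(0) - v_i^d)e^{-\alpha t}$ satisfies \eqref{eq:ode} and has finite cost, and then appeal to the uniqueness of the minimizer of the strictly convex problem — this would be the cleanest way to sidestep any delicate limiting arguments.
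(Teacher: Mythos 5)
Your proposal is correct and follows essentially the same route as the paper's proof: boundedness of the infinite-horizon cost forces $c_1 = 0$, the steady-state condition identifies $c_3 = v_i^d$, and the remaining algebra yields $u_i = \alpha(v_i^d - v_i)$. The only cosmetic difference is that you pin down $c_2$ from the initial condition and evaluate at $t=0$ (invoking time-invariance), whereas the paper simply rearranges the general solution to show the relation $u_i(t)/\alpha = v_i^d - v_i(t)$ holds identically along the trajectory — both arguments are equivalent.
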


\begin{proof}
We prove Theorem \ref{thm:lqr} by assuming that the cost \eqref{eq:Jcost} is bounded, which implies that $v_i$ and $u_i$ are both finite.
First, $T\to\infty$ implies that $c_1 \to 0$.
Next, the steady-state velocity of the CAV must satisfy,
\begin{equation}
     c_3 = \lim_{t\to\infty} v_i(t) = v_i^d.
\end{equation}
Dividing \eqref{eq:mp} by $\alpha$ and re-arranging yields,
\begin{equation} \label{eq:feedback}
    \frac{u(t)}{\alpha} = c_2\alpha e^{-\alpha t} = c_3 -v_i(t) =  v_i^d - v_i(t).
\end{equation}    
Multiplying both sides by $\alpha$ completes the proof.
\end{proof}

We note that while the proof of Theorem \ref{thm:lqr} is equivalent to solving the infinite-horizon LQR problem, however, we do so without first generating and solving the steady-state Riccati equation.
Furthermore, we interpret the feedback law \eqref{eq:feedback} as continuously re-planning the CAV's trajectory at every time instant, i.e., for any state $\bm{x}_i$, \eqref{eq:feedback} induces the optimal unconstrained trajectory.
Next, we construct a reactive controller that uses \eqref{eq:lqr} as a reference input, and we derive CBFs to enforce the constraints imposed on Problem \ref{prb:original}.
%The result is a continually re-planing control law that tracks the reference control signal \eqref{eq:lqr}.

\begin{lemma} \label{lma:primitives}
    The conditions
    \begin{align}
        v_i \geq& \frac{\Delta p}{\Delta t} - \frac{1}{2} u_{\max} \Delta t, \label{eq:kin-1} \\
        v_i \leq& \frac{\Delta p}{\Delta t} + \frac{1}{2} u_{\max} \Delta t, \label{eq:kin-2}
    \end{align}
    guarantee that the CAV travels a distance $\Delta p$ before (after) the time interval $\Delta t$ elapses, i.e., they are the latest departure and earliest arrival deadlines, respectively.
\end{lemma}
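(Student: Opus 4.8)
The plan is to establish Lemma~\ref{lma:primitives} from the time-optimal motion primitives of the double integrator \eqref{eq:dynamics} under the actuation bound $|u_i|\le u_{\max}$. Fix the state $(p_i,v_i)$ at the start of the interval. Covering the greatest distance by time $\Delta t$ requires applying maximum acceleration $u_i\equiv u_{\max}$; integrating \eqref{eq:dynamics} twice gives the displacement $v_i\,\Delta t + \tfrac12 u_{\max}\Delta t^2$. Symmetrically, covering the least distance requires maximum deceleration $u_i\equiv -u_{\max}$, with displacement $v_i\,\Delta t - \tfrac12 u_{\max}\Delta t^2$. To confirm these are extremal among all admissible inputs, I would use a comparison argument: integrating $\dot v_i = u_i$ under any $|u_i|\le u_{\max}$ yields $v_i - u_{\max}t \le v_i(t) \le v_i + u_{\max}t$ on $[0,\Delta t]$, and integrating once more squeezes the displacement between the two primitive displacements at every instant. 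Thus no admissible trajectory reaches a target distance $\Delta p$ earlier than the $+u_{\max}$ primitive nor later than the $-u_{\max}$ primitive.

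The two inequalities then follow by setting displacement equal to $\Delta p$ and solving for $v_i$. Demanding that even the fastest primitive covers $\Delta p$ within the interval, $\Delta p \le v_i\,\Delta t + \tfrac12 u_{\max}\Delta t^2$, rearranges to \eqref{eq:kin-1}; this is exactly the velocity condition under which CAV $i$ can clear a conflict node by its latest-departure deadline $\overline t_i^k$. Demanding that even the slowest primitive has not yet covered $\Delta p$ at time $\Delta t$, $\Delta p \ge v_i\,\Delta t - \tfrac12 u_{\max}\Delta t^2$, rearranges to \eqref{eq:kin-2}, the condition under which CAV $i$ can still be short of the node at its earliest-arrival deadline $\underline t_i^k$. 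Definition~\ref{def:psi} guarantees $\Delta t>0$, so the divisions are well defined.

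The step I expect to be the main obstacle is the edge case in which the maximum-deceleration primitive drives $v_i(t)$ to zero before $\Delta t$ elapses: the constant input $-u_{\max}$ is then no longer physically admissible, since it would reverse the vehicle, and one must instead let the CAV stop and hold $u_i=0$. I would handle this by working in the regime $v_i \ge u_{\max}\Delta t$, where no stop occurs on $[0,\Delta t]$, the position is monotone, and the terminal-displacement comparison above is tight; for the complementary regime the sharp earliest-arrival bound is the braking distance $v_i^2/(2u_{\max}) \le \Delta p$, which coincides with \eqref{eq:kin-2} at $v_i = u_{\max}\Delta t$ and is otherwise stricter. The acceleration primitive needs no analogous caveat, since velocity is monotone under $u_i\equiv u_{\max}$ and \eqref{eq:bounds} places no upper bound on $v_i$.
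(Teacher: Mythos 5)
Your proof is correct and follows essentially the same route as the paper: it evaluates the constant $\pm u_{\max}$ kinematic primitives over the interval $\Delta t$ and rearranges the resulting extremal displacements into \eqref{eq:kin-1}--\eqref{eq:kin-2}, with your comparison argument merely making explicit the extremality that the paper's proof leaves implicit. The braking/overshoot edge case you flag is genuine, but the paper resolves it outside the lemma via condition \eqref{eq:overshoot} in Remark~\ref{rmk:overshoot} (stop by $\Delta t^*$ and wait), so it does not need to be handled inside this proof.
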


\begin{proof}
    To prove Lemma \ref{lma:primitives}, we start with the kinematic equation for constant acceleration over an interval $[t, t + \Delta t]$,
    \begin{equation}
        p(t+\Delta t) = p_i(t) + v_i(t) \Delta t + \frac{1}{2} u_{\max} (\Delta t)^2.
    \end{equation}
    Substituting  $\Delta p := p_i(t+\Delta t) - p_i(t)$ and rearranging yields,
    \begin{equation}
        v_i(t) = \frac{\Delta p}{\Delta t} - \frac{1}{2} u_{\max} \Delta t,
    \end{equation}
    which satisfies the equality of \eqref{eq:kin-1}.
    Substituting $v_i'(t) > v_i(t)$ yields a new change in position,
    \begin{equation}
        \Delta p_i' = v_i'(t) \Delta t + \frac{1}{2} u_{\max} \Delta t^2 > \Delta p_i,
    \end{equation}
    which implies the inequality in \eqref{eq:kin-1}.
    The proof of \eqref{eq:kin-2} is identical, and we omit it for brevity.
\end{proof}

\begin{remark}[Lemma 2 in \citep{tzortzoglou2025mixed}] \label{rmk:overshoot}
We impose the condition
\begin{equation} \label{eq:overshoot}
    \Delta t^* \leq \sqrt{\frac{2\Delta p}{u_{\max}}}
\end{equation}
to ensure that the CAV does not overshoot $\Delta p$ when applying condition \eqref{eq:kin-2}.
\end{remark}

Remark \ref{rmk:overshoot} is significant, because \eqref{eq:kin-2} only implies that the CAV will travel a distance $\Delta p$ at time $\Delta t$ or later.
This can be achieved two ways: (i) decelerating with $v(t) \geq 0$ and reaching $\Delta p$ at time $\Delta t$; (ii) applying a smaller acceleration, overshooting $\Delta p$, stopping, then reversing to reach point $\Delta p$ at time $\Delta t$.
The second case is undesirable, and Remark \ref{rmk:overshoot} is derived by enforcing $v(\Delta t^*)\geq 0$.
Thus, when $\Delta t > \Delta t^*$, we require the vehicle to stop by $\Delta t^*$ then remain stopped for $t\in[\Delta t^*, \Delta t]$.
Without loss of generality, we omit $t^*$ in the remainder of the article to simplify our analytical results.

\begin{theorem} \label{thm:CBFs}
    Let CAV $i$ approach a collision node with index $k$ at position $p_i^k$ and the crossing time interval $[\underline{t}_i^k, \overline{t}_i^k]$.
    Then, the CBFs,
    \begin{align}
        u \leq\,& -\kappa_T\Big(v_i - \frac{\Delta p}{\Delta t_1} - u_{\max}\frac{\Delta t_1}{2}\Big) \notag\\
        &+ \frac{\Delta p - v_i \Delta t_1}{\Delta t_1^2} - \frac{u_{\max}}{2}, \label{eq:cbf-t1} \\
         u_i \geq\,& \kappa_T\Big(\frac{\Delta p}{\Delta t_2} - u_{\max} \frac{\Delta t_2}{2} - v_i\Big) \notag\\
       &+ \frac{\Delta p - v_i \Delta t_2}{\Delta t_2^2} + \frac{u_{\max}}{2}, \label{eq:cbf-t2}
    \end{align}
    where $\Delta p := p_i^k - p(t)$ is distance to the collision node, $\Delta t_1 := \underline{t}_i^k - t$, $\Delta t_2 := \overline{t}_i^k - t$, are the earliest arrival and latest departure times, respectively, and $\kappa_T > 0$ is a gain.
    These CBFs guarantee the arrival of CAV $i$ at position position $p_i^k$ during the interval $[\underline{t}_i^k, \overline{t}_i^k]$.
\end{theorem}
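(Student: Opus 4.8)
Recall that Lemma \ref{lma:primitives} already expresses the earliest-arrival and latest-departure requirements as conditions on the \emph{velocity} rather than the position. My plan is to read the right-hand sides of \eqref{eq:kin-1} and \eqref{eq:kin-2} as the zero level sets of two candidate barrier functions and then invoke the standard CBF forward-invariance argument. Concretely, I would set
\[
    h_1(\bm{x}_i,t) = \frac{\Delta p}{\Delta t_1} + \frac{u_{\max}}{2}\Delta t_1 - v_i, \qquad
    h_2(\bm{x}_i,t) = v_i - \frac{\Delta p}{\Delta t_2} + \frac{u_{\max}}{2}\Delta t_2,
\]
so that, by Lemma \ref{lma:primitives}, $\{h_1 \geq 0\}$ is exactly the earliest-arrival condition \eqref{eq:kin-2} with $\Delta t = \Delta t_1$ and $\{h_2 \geq 0\}$ is exactly the latest-departure condition \eqref{eq:kin-1} with $\Delta t = \Delta t_2$. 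The point of this reformulation is that the raw position constraints of Problem \ref{prb:original} have relative degree two under \eqref{eq:dynamics} and would require HOCBFs, whereas $h_1$ and $h_2$ depend on $v_i$ and therefore have relative degree one: this is the ``lifting'' that makes the CBFs first order.

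Next I would differentiate $h_1$ and $h_2$ along \eqref{eq:dynamics}, using $\tfrac{d}{dt}\Delta p = -v_i$ and $\tfrac{d}{dt}\Delta t_1 = \tfrac{d}{dt}\Delta t_2 = -1$, which gives
\[
    \dot h_1 = \frac{\Delta p - v_i \Delta t_1}{\Delta t_1^2} - \frac{u_{\max}}{2} - u_i, \qquad
    \dot h_2 = u_i - \frac{\Delta p - v_i \Delta t_2}{\Delta t_2^2} - \frac{u_{\max}}{2}.
\]
Imposing the first-order CBF inequalities $\dot h_1 \geq -\kappa_T h_1$ and $\dot h_2 \geq -\kappa_T h_2$ and solving each for $u_i$ reproduces exactly \eqref{eq:cbf-t1} and \eqref{eq:cbf-t2}. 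I would also record the boundary behavior, which makes the compatibility with \eqref{eq:bounds} transparent: on $\{h_1 = 0\}$ one has $\Delta p - v_i\Delta t_1 = -\tfrac12 u_{\max}\Delta t_1^2$, so \eqref{eq:cbf-t1} collapses to $u_i \leq -u_{\max}$, and symmetrically \eqref{eq:cbf-t2} collapses to $u_i \geq u_{\max}$ on $\{h_2 = 0\}$; since $\Delta t_1 < \Delta t_2$ and $\Delta p > 0$ the two boundary velocities cannot coincide, so the constraints are always jointly feasible with $|u_i| \leq u_{\max}$.

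Finally I would assemble the conclusion. Because $h_1$ and $h_2$ are $C^1$ on $\Delta t_j > 0$ with $\partial h_1/\partial v_i = -1$ and $\partial h_2/\partial v_i = 1$ nonvanishing, any locally Lipschitz controller that respects \eqref{eq:cbf-t1}--\eqref{eq:cbf-t2} and starts with $h_1, h_2 \geq 0$ renders $\{h_1 \geq 0\}$ and $\{h_2 \geq 0\}$ forward invariant by the usual comparison-lemma argument for CBFs. Invariance of $\{h_2 \geq 0\}$ means $v_i \geq \Delta p/\Delta t_2 - \tfrac12 u_{\max}\Delta t_2$ for all $t < \overline{t}_i^k$; since $v_i$ stays bounded (as in the hypothesis of Theorem \ref{thm:lqr}), letting $\Delta t_2 \to 0^+$ forces $\Delta p \to 0$, i.e.\ the CAV reaches $p_i^k$ no later than $\overline{t}_i^k$. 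Invariance of $\{h_1 \geq 0\}$ together with Lemma \ref{lma:primitives} and Remark \ref{rmk:overshoot} (which rules out the overshoot-and-reverse branch of \eqref{eq:kin-2}, so the vehicle instead decelerates to rest and waits) guarantees the CAV does not reach $p_i^k$ before $\underline{t}_i^k$. Combining the two yields arrival at $p_i^k$ during $[\underline{t}_i^k, \overline{t}_i^k]$. The differentiation is routine; the step I expect to be delicate is precisely this last translation from set invariance to a \emph{timed} arrival statement --- making the $\Delta t_2 \to 0$ limit rigorous, and correctly invoking Remark \ref{rmk:overshoot} so that the earliest-arrival half does not covertly rely on the vehicle reversing.
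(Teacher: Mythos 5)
Your proposal is correct and follows essentially the same route as the paper: it converts the velocity-level conditions of Lemma \ref{lma:primitives} into first-order barrier functions, differentiates them along \eqref{eq:dynamics}, and imposes the linear CBF condition $\dot h \geq -\kappa_T h$, which rearranges exactly into \eqref{eq:cbf-t1}--\eqref{eq:cbf-t2} (the paper writes the identical argument with the opposite sign convention, $b \leq 0$ and $\dot b \leq -\kappa b$, citing Nagumo's theorem for forward invariance). Your added discussion of the boundary behavior and the $\Delta t_2 \to 0^+$ limit makes explicit the invariance-to-timed-arrival step that the paper leaves implicit, but this is an elaboration of the same construction rather than a different method.
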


\begin{proof}
    We prove Theorem \ref{thm:CBFs} by construction.
    We enforce \eqref{eq:kin-1} and \eqref{eq:kin-2} using a linear CBF of the form,
    \begin{equation} \label{eq:barrier}
        \dot{b}(x) \leq -\kappa b(x),
    \end{equation}
    which guarantees forward invariance of $b(x) \leq 0$ by the Nagumo theorem \citep{Ames2019ControlApplications}.
    Rewriting \eqref{eq:kin-1} implies the constraint,
    \begin{equation} \label{eq:kin1-notation}
        b(x) = \frac{\Delta p}{\Delta t_2} - u_{\max}\frac{\Delta t_2}{2} - v_i \leq 0.
    \end{equation}
    Substituting \eqref{eq:kin1-notation} into \eqref{eq:barrier} yields,
    \begin{align}
    -u_i &+ \frac{\Delta p - v_i \Delta t_2}{\Delta t_2^2}
    + \frac{u_{\max}}{2} \\
        &\leq -\kappa_T\Big(\frac{\Delta p}{\Delta t_2} - u_{\max}\frac{\Delta t_2}{2} - v_i\Big),
    \end{align}
    and re-arranging yields \eqref{eq:cbf-t2}.
    The proof for \eqref{eq:cbf-t1} is identical, and we omit it for brevity.
\end{proof}

Finally, to guarantee the rear-end safety constraint \eqref{eq:rear-end}, we enforce the minimum stopping distance constraint \citep{beaver2021constraint} as a control barrier function,
\begin{align} \label{eq:cbf-rear-end}
    u \leq& -\kappa_R\Big( v - \dot{\delta}(t) - \sqrt{-2 u_{\max}(p - \delta(t) + \gamma)}  \Big) \notag\\
    &- \frac{u_{\max}(v - \dot{\delta})}{\sqrt{-2 u_{\max}(p - \delta(t) + \gamma)}},
\end{align}
where $\delta_i(t)$ is position of the CAV preceding $i$, if it exists, and $\kappa_R$ is the CBF gain.
Note that $\delta_i(t)$ and its derivatives may be discontinuous, i.e., if another CAV turns into the lane in front of CAV $i$.
In the even that no CAV preceds $i$, we let $\delta_i(t)\to\infty$, and \eqref{eq:cbf-rear-end} is trivially satisfied.

\begin{remark} \label{rmk:hocbf}
    The result of Theorem \ref{thm:CBFs} and \eqref{eq:cbf-rear-end} effectively ``lift'' the safety constraints in Problem \ref{prb:original} from functions of position to functions of speed while respecting the acceleration limits.
    This enables us to enforce the safety constraints throgh a first order CBF instead of explicitly considering higher order control barrier functions.
\end{remark}

With the optimal unconstrained trajectory and safety constraints defined, we present a reactive form of Problem \ref{prb:original}, where the CAV continuously replans using the infinite-horizon policy with safety enforced through our crossing time and rear-end safety CBFs.

\begin{problem} \label{prb:reactive}
At each time, CAV $i$ applies the control action that satisfies,
    \begin{align*}
    \min_{u_i}\,&  (u_i - \alpha(v_i^d - v_i))^2 \\
    \text{subject to:}\\
    \text{crossing constraints:}\, &\eqref{eq:overshoot}, \eqref{eq:cbf-t1}, \eqref{eq:cbf-t2} \\
    \text{rear-end safety:}\, &\eqref{eq:cbf-rear-end}, \\
     \text{control bounds:}\,&|u| \leq u_{\max}, \\
     \text{given:}\,&(p^k, \underline{t}_i^k, \overline{t}_i^k),\,k=1,2,\dots,K,
\end{align*}
where $[\underline{t}_i^k, \overline{t}_i^k]$ is the scheduled time interval to pass point $p_k$ of collision node $k$.
\end{problem}

\begin{theorem} \label{thm:sln}
    The solution to Problem \ref{prb:reactive} is,
    \begin{equation}
    \begin{aligned} \label{eq:clamp}
        u_i^* &= \clamp(\alpha(v_i^d - v_i), \underline{u}, \overline{u})\\
        &= \min(\max(\alpha(v_i^d - v_i),\, \underline{u}),\, \overline{u}),
    \end{aligned}
    \end{equation}
    where
    \begin{align}
        \overline{u} &= \max\big(\min(\,\eqref{eq:cbf-t1},\, \eqref{eq:cbf-rear-end}\,), -u_{\max}\big) \label{eq:u-lb}, \\
        \underline{u} &= \min\big(\eqref{eq:cbf-t2}, u_{\max}\big).  \end{align}
\end{theorem}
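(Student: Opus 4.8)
\section*{Proof sketch (plan)}

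The plan is to recognize Problem~\ref{prb:reactive} as a \emph{strictly convex scalar} quadratic program whose feasible set is an interval, so that its unique optimizer is the Euclidean projection of the unconstrained minimizer onto that interval — and that projection is exactly the clamp in \eqref{eq:clamp}. The objective $(u_i - \alpha(v_i^d - v_i))^2$ is a strictly convex quadratic in the single decision variable $u_i$ with unconstrained minimizer $u_i^{\mathrm{ref}} := \alpha(v_i^d - v_i)$, i.e., precisely the infinite-horizon feedback law \eqref{eq:lqr} from Theorem~\ref{thm:lqr}.

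Next I would show the feasible set of Problem~\ref{prb:reactive} is an interval $[\underline u,\overline u]$ with the claimed endpoints. The overshoot condition \eqref{eq:overshoot} constrains only the schedule parameter (equivalently, the $\Delta t_i$ plugged into the CBFs), not $u_i$, so — in line with the convention adopted after Remark~\ref{rmk:overshoot} — it imposes no restriction on the decision variable. Each remaining constraint is affine in $u_i$: the earliest-arrival CBF~\eqref{eq:cbf-t1} and the rear-end CBF~\eqref{eq:cbf-rear-end} are upper bounds $u_i \le (\cdot)$, the latest-departure CBF~\eqref{eq:cbf-t2} is a lower bound $u_i \ge (\cdot)$, and $|u_i|\le u_{\max}$ contributes one bound of each type. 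Intersecting the upper-bound constraints gives $u_i \le \min(\eqref{eq:cbf-t1},\eqref{eq:cbf-rear-end})$, and reconciling this with the hard lower control bound (the CBFs being relaxed in favour of the actuator limit whenever they conflict with it) yields the upper endpoint $\overline u = \max(\min(\eqref{eq:cbf-t1},\eqref{eq:cbf-rear-end}),\,-u_{\max})$ of \eqref{eq:u-lb}; symmetrically, intersecting the lower-bound constraint \eqref{eq:cbf-t2} with the hard upper control bound gives $\underline u = \min(\eqref{eq:cbf-t2},\,u_{\max})$.

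With the feasible set established as the interval $[\underline u,\overline u]$, I would finish by the elementary characterisation of the minimiser of a strictly convex function over an interval: if $u_i^{\mathrm{ref}} \in [\underline u,\overline u]$ then $u_i^{\mathrm{ref}}$ is itself optimal; if $u_i^{\mathrm{ref}} > \overline u$ the objective is strictly decreasing on $[\underline u,\overline u]$ so $\overline u$ is optimal; and if $u_i^{\mathrm{ref}} < \underline u$ it is strictly increasing so $\underline u$ is optimal. These three cases are precisely $\min(\max(u_i^{\mathrm{ref}},\underline u),\overline u) = \clamp(u_i^{\mathrm{ref}},\underline u,\overline u)$, which is \eqref{eq:clamp}. (Equivalently, writing the KKT stationarity and complementary-slackness conditions for the scalar QP returns the same expression.)

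I expect the main obstacle to be the second step — justifying that the feasible set is a genuine, nonempty interval. Nonemptiness is not automatic for higher-order safety constraints; here it rests on the motion-primitive ``lifting'' of Theorem~\ref{thm:CBFs} together with Remarks~\ref{rmk:overshoot}--\ref{rmk:hocbf}, which recast the safety conditions as first-order CBFs that are compatible with the acceleration bounds, and the $\max(\cdot,-u_{\max})$ and $\min(\cdot,u_{\max})$ appearing in the definitions of $\overline u$ and $\underline u$ are exactly the bookkeeping that encodes the priority of the hard control limits over the relaxable CBFs and keeps the interval well defined. The optimisation itself — projection onto an interval — is then routine.
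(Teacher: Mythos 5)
Your proposal is correct and follows essentially the same route as the paper's own proof: identify the objective as a strictly convex scalar quadratic whose unconstrained minimizer is $\alpha(v_i^d - v_i)$, note that every constraint is a lower or upper bound on $u_i$, and conclude that the optimizer is the projection (clamp) onto the resulting interval. Your treatment is simply more explicit about which constraints give upper versus lower bounds, the role of \eqref{eq:overshoot}, and the nonemptiness of the feasible interval, which the paper leaves implicit.
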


\begin{proof}
    The globally optimal (unconstrained) solution to Problem \ref{prb:reactive} is $u_i^* = \alpha(v_i^d - v_i)$.
    All the constraints in Problem \ref{prb:reactive} are lower and upper bounds on $u_i$, and the cost is quadratic in $u_i$,
    Thus, $u_i^*$ must be the unconstrained optimal solution; if that is infeasible, then $u_i^*$ must be the highest lower bound or the lowest upper bound.
\end{proof}

As a result of Theorem \ref{thm:sln}, Problem \ref{prb:reactive} presents an approximation of Problem \ref{prb:original} that is efficient, safe, and has the algorithmically trivial solution presented in Theorem \ref{thm:sln}.
Furthermore, the feedback control law (Theorem \ref{thm:lqr}) effectively re-plans the CAV trajectory continuously.

One major open problem with CBF-based solutions is that Problem \ref{prb:reactive} can become infeasible for certain states, e.g., when the lower and upper control bounds of \eqref{eq:clamp} become inconsistent.
We have already addressed the inconsistency between the safety constraints and control bounds in Remark \ref{rmk:hocbf}, and we address inconsistencies between the safety and crossing constraints next.

\begin{remark} \label{rmk:infeasible}
    If Problem \ref{prb:reactive} becomes infeasible, then the rear-end safety constraint \eqref{eq:cbf-rear-end} is inconsistent with the latest departure time constraint \eqref{eq:cbf-t2}.
    Resolving this is problem-dependent, but generally involves relaxing the departure constraint \eqref{eq:cbf-t2}, and we present two approaches to resolve this next.
\end{remark}

Remark \ref{rmk:infeasible} implies that the intersection infrastructure, rather than the CAV, is responsible for the feasibility of Problem \ref{prb:reactive}, as depicted in Fig. \ref{fig:state-machine}.
We note that, when Problem \ref{prb:reactive} becomes infeasible, it may take some time for the agent to receive a new crossing time.
In this case, the agent ought to enter a ``safe mode'' by relaxing latest departure constraint.
We consider this for two cases of Definition \ref{def:psi}, for both signalized and unsignalized intersections.

\begin{figure}[ht]
    \centering
    \begin{tikzpicture}[
    block/.style={draw, black,minimum width=3cm, minimum height=1cm},
    line/.style={->, ultra thick}]
        %draw our state machine blocks
        \node[block,fill=green!20] (S) at (0, 0) {Apply $u^*$};
        \node[block,fill=red!20] (P)    at (0,-2) {Get Crossing Time};
        %draw our state transitions
        \draw[line] (S.east) --  ++ ( .5, 0) node[below right]{$\underline{u} > \overline{u}$} |-  (P.east);
        \draw[line] (P.west) --  ++ (-.5, 0) node[above left]{$(p_i^k, \underline{t}_i^k, \overline{t}_i^k)$} |- (S.west);
    \end{tikzpicture}
    \caption{A switching system that describes CAV behavior, where the CAV is initialized with tuples of collision node crossing times.
    The CAV generates its control action using Theorem \ref{thm:sln}, and requests an updated schedule if no feasible $u^*$ exists.}
    \label{fig:state-machine}
\end{figure}

\textbf{Unsignalized Intersections} are common in literature that considers $100\%$ penetration of CAVs.
In this case, CAVs request specific crossing times for each conflict node along their trajectory (Definition \ref{def:psi}).
In the case that Problem \ref{prb:reactive} becomes infeasible, there are two possible causes: 1) the crossing time requires an acceleration/deceleration magnitude grater than $u_{\max}$, or 2) the crossing time doesn't sufficiently account for the rear-end safety constraint with a preceding vehicle.
Both cases could occur as a result of poor parameter tuning in the coordinator, and the CAV ought to request a later crossing time that is dynamically feasible and safe.
Under Assumption \ref{smp:comms}, this happens instantaneously.
Otherwise, the CAV ought to enter a safe mode, e.g., emergency stopping or cruising at a constant speed, until a new crossing time is recieved. 
%; otherwise the CAV ought to enter a safe mode, such as emergency stopping or cruising at a constant speed, until a new crossing time is received.

\textbf{Signalized intersections} consist of a single conflict node at the entrance of the intersection.
Furthermore, the feasible crossing time is the time interval where the light is (or will be) green.
In this case, infeasibility of Problem \ref{prb:reactive} implies that the CAV was unable to cross the intersection during the current green light phase, either due to the $u_{\max}$ or rear-end safety constraints being active.
In either case, the CAV can simply request the next green light interval and join the traffic queue to guarantee the feasibility of Problem \ref{prb:reactive}.
For mixed traffic, the signal automatically handles lateral collision avoidance, and the CAV must only consider rear-end safety through \eqref{eq:cbf-rear-end} by measuring $\delta_i(t)$ and $\dot{\delta}_i(t)$.
%
%In this case the traffic light automatically handles

\section{Simulation Results} \label{sec:sim}

To demonstrate the performance of Problem \ref{prb:reactive} at an unsignalized intersection, we compare our approach to the optimal control-based approach in \citep{malikopoulos2021optimal}.
We consider a symmetric intersection with two crossing lanes and a single conflict point at $30$ m.
To generate the trajectories of \citep{malikopoulos2021optimal}, each vehicle generates the optimal trajectory that minimizes
\begin{equation} \label{eq:Ju}
    J_u(u_i, t_i^f) = \frac{1}{2}\int_0^{T} u_i^2\, dt.
\end{equation}
This unconstrained optimal solution is a linear acceleration profile,
\begin{equation}\label{eq:linear-control}
    u_i(t) = a_i t+ b,
\end{equation}
and each CAV $i$ sequentially selects its own arrival time $T$.
Note that $T$ is the the smallest value such that the path satisfies all safety constraints, which makes $\delta_i(t)$ and the safe crossing times explicitly available to all other vehicles.
The resulting trajectories are presented in Fig. \ref{fig:automatica}.

\begin{figure}[ht]
    \centering
    \includegraphics[width=0.95\linewidth]{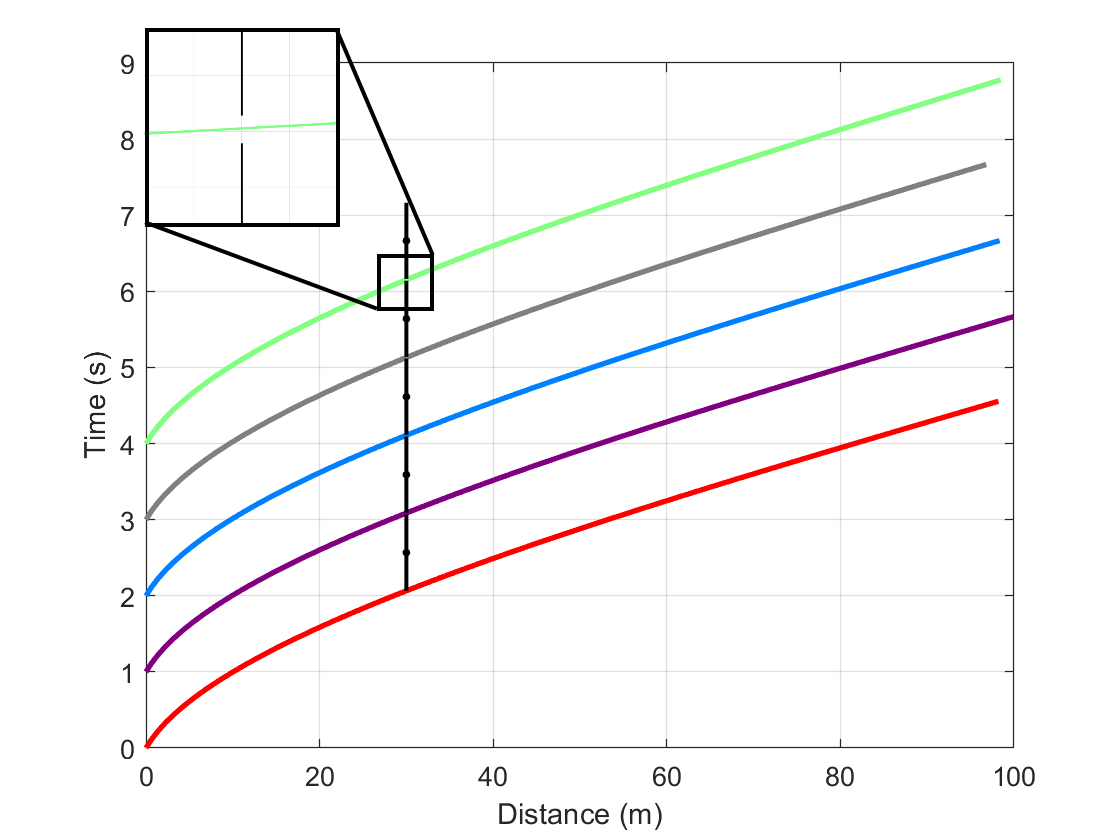}
    \caption{Trajectories for the optimal control-based solution of \citep{malikopoulos2021optimal}. Black vertical lines show the lateral crossing constraints.}
    \label{fig:automatica}
\end{figure}

To create a fair comparison with Problem \ref{prb:reactive}, we manually tuned the simulation parameters based on the optimal control trajectories (see Fig. \ref{fig:automatica}).
Namely, we selected the same initial conditions crossing times, and overall trajectory length for each CAV.
However, this only facilitates an approximate comparison, as \citep{malikopoulos2021optimal} uses a speed dependent time headway to ensure safety, whereas we capture speed dependence through a fixed standstill distance \eqref{eq:cbf-rear-end}.
The resulting parameters for our algorithm using Problem \ref{prb:reactive} are presented in Table \ref{tab:sim_parameters}.

\begin{table}[ht]
    \centering
    \begin{tabular}{cccccc}
        $\alpha$ & $v_d$ & $\gamma$ & $\kappa_T$ & $\kappa_R$ & $u_{\max}$ \\ \toprule
        0.25 s$^{-1}$ & 30 m/s & 1 m & 0.5 & 100 &  25 m/s$^2$
    \end{tabular}
    \caption{Parameters used for the CAV controller in Problem \ref{prb:reactive}.}
    \label{tab:sim_parameters}
\end{table}

As suggested by Remark \ref{rmk:infeasible}, the performance of our proposed approach is closely coupled to the system parameters--including the initial distance to the conflict point, the crossing times, the constraint parameters, and the barrier function coefficients.
In general, the scheduled crossing times directly determine the magnitude of congestion reduction at the intersection.
For individual CAVs, the CBF coefficients in Problem \ref{prb:reactive} determine how aggressively the CAVs will accelerate and decelerate to meet the scheduled crossing times.
Namely, large coefficients will lead to sudden accelerations near the intersection, but the CAV will not be influenced by distant constraints.
In contrast, a small coefficient will cause CAVs to react earlier to potential constraint activations, resulting in a lower overall acceleration.
Matching these parameters to the optimal control approach allows for an approximate comparison between the two methods.
The trajectories generated by Problem \ref{prb:reactive} are shown in Fig. \ref{fig:proposed}.

\begin{figure}[ht]
    \centering
    \includegraphics[width=0.95\linewidth]{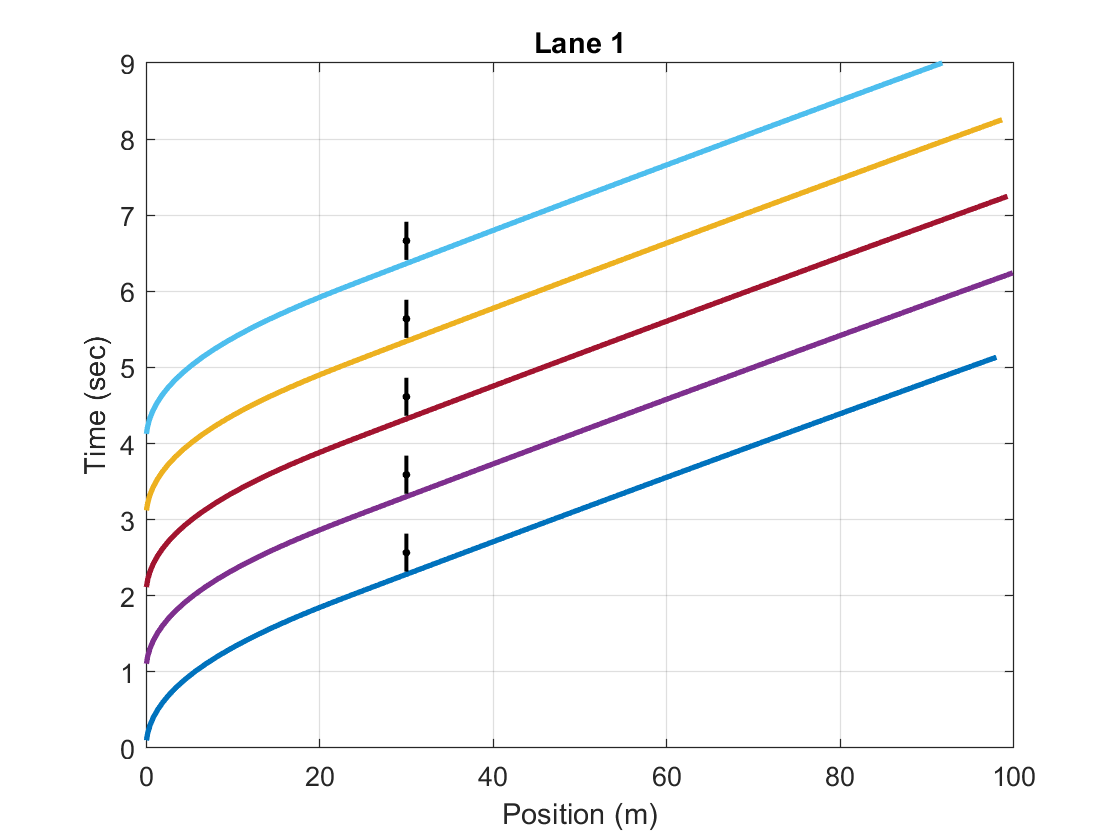}
    \caption{CAV trajectories generated with our proposed approach. Black vertical lines show the lateral crossing constraints.}
    \label{fig:proposed}
\end{figure}

The quantitative results of our analysis are presented in Table \ref{tab:metrics} for both approaches.
We simulated both algorithms for two cases, one with a relatively high acceleration penalty ($\alpha = 0.25$), and one with a lower acceleration penalty ($\alpha = 1.5$).
The column $J_u$ corresponds to the average cost of all vehicles computed with \eqref{eq:Ju}, while $J_{\alpha}$ is computed with \eqref{eq:Jcost}.

\begin{table}[ht!]
    \centering
    \begin{tabular}{r|ccc}
     Algorithm          & $J_u$ & $J_\alpha$ & Time  \\\toprule
     OCP ($\alpha=1.50$) & \textbf{57.2} & 979.9 & 28.7 ms \\
     OCP ($\alpha=0.25$) & \textbf{57.2} & 2056.5 & 28.7 ms \\
     Proposed ($\alpha=1.50$) & 185.9 & \textbf{525.5} & \textbf{1.8 $\mu$s} \\ 
     Proposed ($\alpha=0.25$) & 108.5 & 4030.8 & \textbf{1.8 $\mu$s}
    \end{tabular}
    \caption{Average energy costs and computational time for the optimal control approach \citep{malikopoulos2021optimal} and Problem \ref{prb:reactive}.}
    \label{tab:metrics}
\end{table}

Unsurprisingly, the optimal control approach finds the global minimum for \eqref{eq:Ju}.
Note that the global minimizer of \eqref{eq:Ju} is a linear acceleration \eqref{eq:linear-control} whereas the trajectory generated by Problem \ref{prb:reactive} is an exponential \eqref{eq:mp}.
Thus, we always expect our approach to use more overall control effort for the same constraints and boundary conditions.
Interestingly, when the penalty for applied control effort is high ($\alpha = 0.25$), the optimal control approach also out-performs our proposed approach.
This is because the arrival time constraints of Problem \ref{prb:reactive} perturb the CAVs away from the initial unconstrained trajectory, which would be a global minimizer.
This is clear from Fig. \ref{fig:proposed}, as the CAV trajectories cross the conflict point (located at $30$ m) as late as possible--i.e., the constraint \eqref{eq:cbf-t2} is active along the trajectory.
Thus, we only expect significant fuel economy benefits for long-duration tasks where the impact of engine transients are less significant than tracking an efficient cruising speed.% during the overall deployment of the CAV. %, i.e., $\alpha$ is not small.

The quantitative benefit of our approach is the computational time required to generate control actions.
By Theorem \ref{thm:sln}, the control bounds are trivial to compute, and each CAV can generate optimal actions on the order of microseconds.
In contrast, \cite{malikopoulos2021optimal} takes approximately $30$ ms ($5$ orders of magnitude higher) for 10 vehicles at a single crossing node.
For complex node geometries, this can easily increase to the order of seconds, as newly arriving vehicles must plan their trajectories around existing vehicles.

Finally, our proposed approach has the benefits of flexibility and guaranteed feasibility.
One critical drawback of \citep{malikopoulos2021optimal} is that the vehicles \textbf{must} follow a linear acceleration profile, %.
%While this has been addressed somewhat in recent works \citep{tzortzoglou2024feasibility}, it
which means the vehicles cannot come to a stop (e.g., at a traffic light) and a feasible trajectory may not exist.
%In fact, it is straightforward to find reasonable initial conditions, such as crossing platoon of vehicles, where no feasible trajectory exists.
Furthermore, planning the entire optimal trajectory means the CAVs need to predict the future trajectories of all other vehicles in the intersection.
While this is possible for 100\% penetration of CAVs, it makes interactions with human driven vehicles a significant open challenge.
In contrast, Problem \ref{prb:reactive} always have a feasible solution (come to a complete stop and wait), and it only requires the current state information of other vehicles. %--so it is easily compatible with human-driven vehicles.

\section{Conclusion} \label{sec:conclusion}

In this article, we have developed a long-duration autonomy approach for the operation of CAVs in a transportation network.
We start with an optimal control problem, and follow a principled approach to derive a reactive control law that minimizes energy consumption while guaranteeing safety using CBFs.
We lifted the higher order CBFs to first order using motion primitives. % such that 
Thus, the safety constraints are always compatible with our control effort bounds and the control policy always has a feasible solution.
We also demonstrated the performance of our algorithm in simulation against a state-of-the-art approach based on optimal control.

Future work includes simulation and hardware experiments to demonstrate the capabilities of our proposed controller in many environments--highway, urban, and transitions between them--and mixed traffic scenarios (e.g., see \cite{tzortzoglou2025mixed}).
Incorporating robustness to delays, noise, and unmodeled dynamics is another interesting research direction, as well as considering 2D lane-free environments.
Enabling agents to determine their crossing times in a distributed manner while guaranteeing lateral safety is another intriguing research direction, which aligns with some open problems in multi-agent reinforcement learning.

\bibliographystyle{elsarticle-harv}
\bibliography{my_pubs,mendeley,refs}

\end{document}